\title{A finite dimensional trace formula}
\author{Zhao Tianhong}
\date{2025/5/18}
\chardef\bslash=`\\ 
\newtheorem{thm}{Theorem}[section]
\newtheorem{cor}[thm]{Corollary}
\newtheorem{lem}[thm]{Lemma}
\newtheorem{ex}{Example}[section]
\theoremstyle{definition}
\theoremstyle{remark}
\newcommand{\R}{\mathbb{R}}
\newcommand{\Z}{\mathbb{Z}}
\begin{document}
	\maketitle
\begin{abstract}
We take the trace of Von-Neumann's ergodic theorem and get a trace formula of a unitary matrix family. It is an extension of Poisson summation formula in higher dimension. We also construct a family of crystalline measure with complex coefficient.
\end{abstract}

\section{Introduction}
  It is known that crystalline measure\cite{cry}\cite{meyer2023crystalline} can be constructed by Lee-Yang Polynomial.

 We give a Von-Neumann's ergodic theorem version of trace formula\cite{lenz2009continuity}\cite{silva1998trace} of finite dimensional unitary matrix and give a example of crystalline measure with complex coefficient.

Let $U\left(k\right),k\in \R$ be a family of finite-dimensional unitary matrix, $A\left(k\right),k\in \R$ be a family of uniform bounded matrix. Suppose $k \mapsto U\left(k\right)$ and $k \mapsto A\left(k\right)$ are smooth, with some growth condition. Suppose $f\left(k\right)=\det\left(U\left(k\right)-I\right)$ only has simple zero. Let $\lambda\left(k\right)$ be the eigenvalue of $U\left(k\right)$, $\left| \lambda\left(k\right) \right\rangle$ be the eigenvector of $\lambda\left(k\right)$, then we have:
\begin{thm}
	$$\frac{1}{2\pi}\sum_{m\in \Z}tr\left(U^m\left(k\right)A\left(k \right)  \right)= \sum_{k_0} \frac{\left \langle \lambda\left (k_0\right )  |A\left (k_0\right ) |\lambda\left (k_0\right )   \right \rangle }{\left | \left \langle \lambda\left (k_0\right )  |U^{'}\left (k_0\right ) |\lambda\left (k_0\right )   \right \rangle \right | }\delta\left(k-k_0\right) $$
	where $k_0\in \R$ satisfy: $\det\left(U\left(k_0\right)-I\right)=0, \ \ \lambda\left(k_0 \right)=1. $
\end{thm}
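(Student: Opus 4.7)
The plan is to reduce the identity to a pointwise spectral computation plus the classical Poisson summation on the circle, and then perform a change of variables in the resulting Dirac combs. Since everything is to be read as a distributional identity in $k$, I would work inside $\mathcal{D}'(\R)$ throughout.

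First, at each $k$ where $U(k)$ has simple spectrum (and away from the exceptional $k_0$'s this is automatic for generic families), diagonalize $U(k)=\sum_j e^{i\theta_j(k)}\,|\lambda_j(k)\rangle\langle\lambda_j(k)|$ with smooth local branches of $\theta_j$ and $|\lambda_j\rangle$. Take the trace against $A(k)$ to get
$$tr\left(U^m(k)A(k)\right)=\sum_j e^{im\theta_j(k)}\left\langle\lambda_j(k)\,\middle|\,A(k)\,\middle|\,\lambda_j(k)\right\rangle.$$
Summing over $m\in\Z$ and invoking the one-dimensional Poisson summation formula $\sum_{m\in\Z}e^{im\theta}=2\pi\sum_{n\in\Z}\delta(\theta-2\pi n)$ (applied in the sense of distributions on the circle, then pulled back along $\theta_j$), I obtain
$$\frac{1}{2\pi}\sum_{m\in\Z}tr\left(U^m(k)A(k)\right)=\sum_j\sum_n\left\langle\lambda_j(k)\,\middle|\,A(k)\,\middle|\,\lambda_j(k)\right\rangle\,\delta\!\left(\theta_j(k)-2\pi n\right).$$

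Next I would evaluate these pulled-back delta functions. The support consists exactly of the $k_0$ with $\lambda_j(k_0)=1$, and by hypothesis $f(k)=\det(U(k)-I)$ has only simple zeros, so at each such $k_0$ precisely one branch $j$ contributes and $\theta_j'(k_0)\neq 0$. The change-of-variable formula for the Dirac mass then gives
$$\delta\!\left(\theta_j(k)-2\pi n\right)=\frac{\delta(k-k_0)}{\left|\theta_j'(k_0)\right|}.$$
So the task reduces to identifying $|\theta_j'(k_0)|$ with $\bigl|\langle\lambda_j|U'|\lambda_j\rangle\bigr|$, which I would do by the usual Hellmann--Feynman argument: differentiate $U(k)|\lambda_j(k)\rangle=e^{i\theta_j(k)}|\lambda_j(k)\rangle$, pair with $\langle\lambda_j(k)|$, use the normalization $\langle\lambda_j|\lambda_j\rangle=1$ and the unitary identity $\langle\lambda_j|U=e^{i\theta_j}\langle\lambda_j|$, to cancel the derivative of the eigenvector and obtain $\langle\lambda_j|U'|\lambda_j\rangle=i\theta_j'(k)\,e^{i\theta_j(k)}$; taking absolute values yields $|\theta_j'(k)|=|\langle\lambda_j|U'|\lambda_j\rangle|$, evaluated at $k_0$.

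The main technical obstacle is making sense of the summation $\sum_{m\in\Z}tr(U^m A)$ as a genuine tempered distribution and justifying the Poisson step, rather than treating it as a formal identity. I would handle this by testing against a Schwartz function $\varphi(k)$, writing the left-hand side as $\sum_m\int\varphi(k)\,tr(U^m(k)A(k))\,dk$, and, after local diagonalization, recognizing the inner integrals as Fourier coefficients of $k\mapsto\varphi(k)\langle\lambda_j|A|\lambda_j\rangle$ reparametrized through $\theta_j$; the growth and smoothness assumptions on $U(k)$, $A(k)$ are what allow absolute convergence after one integration by parts, and this is the place where the unstated ``growth condition'' in the hypothesis has to be used. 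A secondary subtle point is choosing smooth local branches of $\theta_j$ and $|\lambda_j\rangle$ near $k_0$; this is legitimate because the zero of $f$ is simple, so $1$ is a simple eigenvalue in a neighborhood of $k_0$, and standard analytic perturbation theory provides the branches needed for the Hellmann--Feynman computation.
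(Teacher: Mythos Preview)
Your argument is correct and reaches the same conclusion, but the route differs from the paper's in one key technical choice. You diagonalize $U(k)$, write $tr(U^m A)=\sum_j e^{im\theta_j}\langle\lambda_j|A|\lambda_j\rangle$, and invoke the circle Poisson formula $\sum_m e^{im\theta}=2\pi\sum_n\delta(\theta-2\pi n)$ directly, then pull back along $\theta_j$. The paper instead regularizes by Abel summation: it inserts a factor $t^{|m|}$ with $0<t<1$, sums the resulting geometric series to the Poisson kernel $\frac{1-t^2}{|t-\lambda(k)|^2}$, and lets $t\to 1^-$ so that the Poisson kernel converges to $2\pi\,\delta(\lambda(k)-1)$ as an approximate identity. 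Both arguments finish with the same Hellmann--Feynman computation (the paper isolates it as a separate lemma) to identify the Jacobian factor $|\theta_j'(k_0)|=|d\lambda/dk(k_0)|$ with $|\langle\lambda|U'|\lambda\rangle|$.

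What each buys: the paper's Poisson-kernel route makes the interchange of sum and integral trivially legitimate for each fixed $t<1$ (the sum is absolutely convergent), pushing all the analytic content into the single limit $t\to 1^-$; it also sidesteps the need to define the pullback of $\delta$ along $\theta_j$ away from the zeros. Your direct route is conceptually cleaner and makes the link to the one-dimensional Poisson formula explicit, at the cost of having to argue carefully (as you do) that the pullback $\theta_j^*\delta$ is well defined and that the local smooth branches $\theta_j$, $|\lambda_j\rangle$ exist near each $k_0$; your appeal to simple zeros of $f$ and analytic perturbation theory handles this correctly.
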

Let $A\left(k\right)=U^{'}\left(k\right)=\frac{dU\left(k\right)}{dk}$, suppose $U\left(k\right) $ has positive generator, then we have:
\begin{thm}
	$$\frac{1}{2\pi i}\sum_{m\in \Z}tr\left(U^m\left(k\right)U^{'}\left(k \right)  \right)= \sum_{k_0} \delta\left(k-k_0\right) $$
\end{thm}

\section{Proof}
We start at one-dimensional case:
\begin{thm}(Poisson summation formula)
	Let $z \in S^1$, we have:
	$$
	\frac{1}{2\pi}\sum_{m\in \Z}z^m=\delta\left(z-1\right).
	$$
\end{thm}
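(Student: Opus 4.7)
The plan is to pair both sides of the identity with an arbitrary smooth test function $\phi \in C^\infty(S^1)$ and verify the pairings agree. Parametrize the circle by $z = e^{ik}$ with $k \in [0, 2\pi)$, so that $\delta(z - 1)$ is interpreted as the distribution sending $\phi$ to $\phi(1)$, and the left-hand side is interpreted as the distributional limit of the symmetric partial sums $S_N(z) = \sum_{|m|\le N} z^m$.

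First I would compute the pairing $\frac{1}{2\pi}\int_0^{2\pi} S_N(e^{ik}) \phi(e^{ik})\, dk$. Using the orthogonality relation $\frac{1}{2\pi}\int_0^{2\pi} e^{i(m-n)k}\, dk = \delta_{mn}$, this reduces to the symmetric partial sum $\sum_{|m|\le N} \hat\phi(-m)$ of the Fourier series of $\phi$ evaluated at $k = 0$, where $\hat\phi(m) = \frac{1}{2\pi}\int_0^{2\pi} \phi(e^{ik}) e^{-imk}\, dk$ denotes the $m$-th Fourier coefficient.

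Next I would invoke Fourier inversion on $S^1$: for $\phi \in C^\infty(S^1)$, iterated integration by parts against $e^{-imk}$ gives $\hat\phi(m) = O(|m|^{-N})$ for every $N$, so the Fourier series converges absolutely to $\phi(1)$. Letting $N \to \infty$ yields $\lim_{N} \frac{1}{2\pi}\int_0^{2\pi} S_N(e^{ik}) \phi(e^{ik})\, dk = \phi(1)$, which is exactly the claimed distributional identity.

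The only technical point is justifying the distributional convergence, which reduces to the rapid decay of Fourier coefficients of smooth functions, a standard fact. The whole statement may be viewed as the simplest instance of Poisson summation for the dual pair $\Z \leftrightarrow S^1$, and is in effect a restatement of Fourier inversion on the compact abelian group $S^1$. The work in the higher-dimensional theorems will then come from reducing the trace expression to a superposition of such one-dimensional deltas via spectral decomposition of $U(k)$.
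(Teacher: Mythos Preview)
Your argument is correct: pairing the symmetric partial sums with a smooth test function on $S^1$ and invoking the rapid decay of Fourier coefficients is exactly Fourier inversion on the circle, and it establishes the distributional identity cleanly.

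The paper does not give a separate proof of this one-dimensional statement; it quotes it as the classical Poisson summation formula and then explains how the substitution $z=e^{2\pi i t}$ recovers the usual form on $\R$. The method the paper does develop, in the proof of the matrix-valued Theorem~2.2, is different from yours: rather than taking bare symmetric partial sums, it inserts an Abel factor $t^{|m|}$ with $0<t<1$, sums the resulting geometric series to obtain the Poisson kernel $\frac{1}{2\pi}\frac{1-t^2}{|t-z|^2}$, and lets $t\to 1^-$ to produce $\delta(z-1)$. Specialized to dimension one that would give an alternative proof of the present statement. Your Dirichlet-kernel/partial-sum route is the more direct one here, since smoothness already gives absolute convergence; the paper's Abel regularization is chosen because in the matrix case it turns $\sum_m t^{|m|}\lambda(k)^m$ into a closed-form Poisson kernel for each eigenvalue, which is what drives the spectral side of the trace formula. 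Your closing remark about reducing the higher-dimensional theorems to a superposition of these one-dimensional deltas via the spectral decomposition of $U(k)$ is precisely the strategy the paper follows.
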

To get the usual Poisson summation formula, we only need to take $z=e^{2\pi it}, t\in \R$ and use the change the variable of Dirac $\delta$ function:
$$
\delta\left(e^{2\pi it}-1\right)=\frac{1}{2\pi}\sum_{m} \delta\left(t-m\right),
$$
then we will get:
$$
\sum_{m\in \Z}e^{2m\pi it}=\sum_{m\in \Z} \delta\left(t-m\right).
$$
Now we extend this statement to high-dimensional case.

We want to consider a summation: 
$$\frac{1}{2\pi}\sum_{m\in \Z}U^{m},$$
where $U$ is a unitary matrix.

From Von-Neumann's ergodic theorem, we know:
$$
\lim_{N\to \infty}\frac{1}{2N}\sum_{m=-N}^{N}U^{m}=P_1,
$$
$P_1$ is the projection on invariant subspace:
$$
X_{inv}=\left\lbrace x\in X, Ux=x \right\rbrace,
$$
and $X$ is the underlying Hilbert space. So it could be better to consider the trace of $\frac{1}{2\pi}\sum_{m\in \Z}U^{m}$ and get the information of fixed subspace by $U$. Noted that $P_1=0$ if and only if $U$ does not have eigenvalue $1$.
\begin{thm}
	Let $U\left(k\right),\ k\in\R$ be a family of unitary matrix, $k \mapsto U\left(k\right)$ is smooth. Suppose $f\left(k\right)=\det\left(U\left(k\right)-I\right)$ only has simple zero and the eigenvalue $\lambda\left(k\right)$ has non-zero speed: $\frac{d\lambda}{dk}\neq 0$ and the fourier coefficient $c_m\left(g\right) =\int_{-\infty}^{\infty}tr\left( U^m\left(k\right)\right)g\left(k\right)dk$ satisfy: $\sum_{m\in \Z}c_m\left(g\right)$ converges absolutely, where $g$ is Schwartz function.
	
	Then we have:
	$$
	\frac{1}{2\pi}\sum_{m\in \Z}tr\left( U^m\left(k\right)\right)  =\sum_{\lambda}\delta\left(\lambda\left(k\right) -1\right)=\sum_{k_0}\frac{\delta\left(k-k_0\right).}{\left|\frac{d\lambda}{dk}(k_0)\right| }
	$$
	where $\lambda\left(k\right)$ is the eigenvalue of $U\left(k\right)$, the middle sum runs over all the eigenvalue of $U\left(k\right)$, the right sum runs over all the $k_0\in \R$ such that:
	$$f\left(k_0\right)=\det\left(U\left(k_0\right)-I\right)=0.$$
	The sum converges in the sense of distribution on Schwartz space $\mathcal{S}\left(\R\right) $.
\end{thm}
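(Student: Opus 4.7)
The plan is to reduce to the one-dimensional Poisson summation formula already stated by diagonalizing $U(k)$ pointwise. Pair both sides with a Schwartz test function $g\in\mathcal{S}(\R)$: the left-hand side converges in $\mathcal{S}'(\R)$ by the absolute-convergence hypothesis on $c_m(g)$, and the right-hand side is a locally finite sum of Dirac masses because the simple-zero condition on $f(k)=\det(U(k)-I)$ forces its zero set to be discrete.

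Near each zero $k_0$ the argument is local. Since $1$ is a simple eigenvalue of the unitary $U(k_0)$, analytic perturbation theory produces, on a small neighborhood $I_0$ of $k_0$, a smooth eigenvalue branch $\lambda(k)\in S^1$ with $\lambda(k_0)=1$ and $\lambda'(k_0)\ne 0$, together with a smooth rank-one spectral projection $P(k)$ onto the associated eigenspace. Shrinking $I_0$ ensures that the remaining eigenvalues of $U(k)$ stay uniformly bounded away from $1$. Writing $tr(U^m(k))=\lambda(k)^m+tr\bigl(U^m(k)(I-P(k))\bigr)$ and applying the scalar Poisson formula to the first summand yields, in the sense of distributions on $I_0$,
$$\frac{1}{2\pi}\sum_{m\in\Z}\lambda(k)^m=\delta(\lambda(k)-1)=\frac{\delta(k-k_0)}{|\lambda'(k_0)|},$$
where the second equality uses the change-of-variables formula for $\delta$, valid because $\lambda$ is a local diffeomorphism at $k_0$.

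Gluing via a smooth partition of unity subordinate to the cover $\{I_0\}\cup(\R\setminus\{k_0\text{'s}\})$ then reduces the remaining work to the following claim: whenever $V(k)$ is a smooth family of unitaries with $\det(V(k)-I)\ne 0$ throughout the support of a test function $h$, the distribution $\frac{1}{2\pi}\sum_m tr(V^m(k))$ pairs to zero against $h$. This is the main obstacle, because the eigenvalue branches of $V(k)$ need not be globally smooth if crossings occur away from $\lambda=1$. I would handle it by covering $\operatorname{supp}(h)$ by finitely many subintervals on which all eigenvalues can be selected smoothly (possible off a closed nowhere-dense set of crossings), applying the scalar Poisson formula to each branch, and observing that every resulting $\delta(\lambda_j(k)-1)$ vanishes because no branch reaches $1$; the exceptional crossing set is absorbed by a continuity/density argument in $h$, using the absolute convergence of $\sum_m c_m(g)$. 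Summing the local contributions over all zeros $k_0$ then reproduces the right-hand side of the theorem.
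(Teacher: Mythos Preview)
Your argument is sound in outline but follows a genuinely different route from the paper. The paper never localizes or selects eigenvalue branches; it uses a global Abel regularization. One inserts a factor $t^{|m|}$ with $0<t<1$, interchanges sum and integral by dominated convergence, and sums the geometric series to get the Poisson kernel $\sum_{\lambda}\frac{1}{2\pi}\frac{1-t^{2}}{|t-\lambda(k)|^{2}}$. Letting $t\to 1^{-}$ produces $\sum_{\lambda}\delta(\lambda(k)-1)$ directly, and the absolute-convergence hypothesis on $\sum_{m}c_{m}(g)$ is invoked only to identify the Abel limit with the actual sum. Because the Poisson kernel is a symmetric function of the eigenvalues, this computation never requires globally smooth branches or any control on eigenvalue crossings; the only place a smooth branch is needed is at each $k_{0}$, where simplicity of the eigenvalue $1$ already supplies it.

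The weak point in your approach is precisely the ``main obstacle'' you flag. Applying the scalar Poisson formula to a branch $\mu_{j}(k)$ bounded away from $1$ presupposes that $\sum_{m}\mu_{j}(k)^{m}$ defines a distribution in $k$, which in turn needs $\mu_{j}'\neq 0$ so that the pullback $\delta(\mu_{j}(k)-1)$ makes sense. The hypotheses impose nonzero speed only on the branch that hits $1$, and the global absolute convergence of $\sum_{m}c_{m}(g)$ does not descend to individual branches once you split the trace; your crossing--density remark does not close this. The clean fix is exactly the paper's device applied locally: for the complementary piece one has $\sum_{m}t^{|m|}\mu_{j}(k)^{m}=\frac{1-t^{2}}{|t-\mu_{j}(k)|^{2}}\to 0$ uniformly on $I_{0}$ since $|\mu_{j}(k)-1|\geq\varepsilon$, and Abel's theorem (using that the full-trace sum converges and the principal-branch sum converges by your submersion argument) forces the residual contribution to vanish. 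With that patch your local--global strategy goes through, but the Poisson-kernel step is doing the real work either way.
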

\begin{proof}
	Let $-1<t<1$ be a parameter, $g\left(k\right)\in \mathcal{S}\left(\R\right)$, then we have:
	\begin{align*}
	 \frac{1}{2\pi}\sum_{m\in \Z}t^{\left|m\right| }\int_{-\infty}^{\infty}tr\left(U^m\left(k\right)\right)g\left (k\right )dk=& 	\frac{1}{2\pi}\int_{-\infty}^{\infty}\sum_{m\in \Z}t^{\left|m\right| }tr\left(U^m\left(k\right)\right)g\left (k\right )dk \\
	 =&\frac{1}{2\pi}\sum_{\lambda}\int_{-\infty}^{\infty}\frac{1-t^2}{\left | t-\lambda\left (k\right )  \right |^2 } g\left (k\right )dk
	\end{align*}
	First "=" is dominated convergence theorem. From definition of Dirac $\delta$ function:
	$$
	t\to 1^{-},\ \ \ \frac{1}{2\pi}\frac{1-t^2}{\left | t-\lambda\left (k\right )  \right |^2} \to \delta\left(\lambda\left(k\right) -1\right).
	$$
	Take the limit: $t\to 1^{-}$ and we get:
	$$
	\frac{1}{2\pi}\sum_{m\in \Z}tr\left( U^m\left(k\right)\right)  =\sum_{\lambda}\delta\left(\lambda\left(k\right) -1\right).
	$$
	Suppose $f\left(k\right)=\det\left(U\left(k\right)-I\right)$ only has simple zero and the eigenvalue $\lambda\left(k\right)$ has non-zero speed: $\frac{d\lambda}{dk}\neq 0$, then from change of variable we get:
	$$
	\sum_{\lambda}\delta\left(\lambda\left(k\right) -1\right)=\sum_{k_0}\frac{\delta\left(k-k_0\right).}{\left|\frac{d\lambda}{dk}(k_0)\right| }
	$$
	where $\lambda\left(k\right)$ is the eigenvalue of $U\left(k\right)$, the left sum runs over all the eigenvalue of $U\left(k\right)$, the right sum runs over all the $k_0\in \R$ such that:
    $$f\left(k_0\right)=\det\left(U\left(k_0\right)-I\right)=0.$$
\end{proof}
 
\begin{lem}(speed of eigenvalue)
	Suppose $f\left(k\right)=\det\left(U\left(k\right)-I\right)$ only has simple zero.
	We have: $$\frac{d\lambda}{dk}\left(k_0\right)=\left \langle \lambda\left (k_0\right )  |U^{'}\left (k_0\right ) |\lambda\left (k_0\right )   \right \rangle.$$ for all the $k_0\in \R$ such that $$f\left(k_0\right)=\det\left(U\left(k_0\right)-I\right)=0.$$ 
\end{lem}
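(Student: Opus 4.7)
The plan is to reduce the claim to first-order (Rayleigh-Schr\"odinger) perturbation theory for a simple eigenvalue. The overall strategy has three parts: justify that a smooth branch of eigenpair exists through $k_0$, then differentiate the eigenvalue equation, then use the normality of $U$ to kill the derivative of the eigenvector.

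First I would check that the ``simple zero of $f$'' hypothesis implies $\lambda=1$ is an algebraically simple eigenvalue of $U(k_0)$. Factoring the characteristic polynomial, $f(k)=\det(U(k)-I)=\prod_{j}(\lambda_j(k)-1)$, only those $j$ with $\lambda_j(k_0)=1$ contribute a zero at $k_0$, and each contributes order at least one; since $f'(k_0)\neq 0$, exactly one such index exists and its factor vanishes to order one. In particular this also gives $\frac{d\lambda}{dk}(k_0)\neq 0$, matching the hypothesis used in the preceding theorem. By standard analytic perturbation theory for a simple eigenvalue (or by the implicit function theorem applied to $\det(U(k)-zI)=0$ at $(k_0,1)$), there is a smooth branch $k\mapsto\lambda(k)$ with $\lambda(k_0)=1$ and a smooth branch of unit eigenvectors $k\mapsto |\lambda(k)\rangle$ near $k_0$.

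Next, I would differentiate the eigenvalue equation $U(k)|\lambda(k)\rangle=\lambda(k)|\lambda(k)\rangle$ in $k$ and pair with the bra $\langle\lambda(k)|$. Writing $\partial_k|\lambda(k)\rangle$ for the derivative of the eigenvector, one obtains
\[
\langle\lambda(k)|U'(k)|\lambda(k)\rangle+\langle\lambda(k)|U(k)\,\partial_k|\lambda(k)\rangle=\lambda'(k)\langle\lambda(k)|\lambda(k)\rangle+\lambda(k)\langle\lambda(k)|\,\partial_k|\lambda(k)\rangle.
\]
Since $U(k)$ is unitary (hence normal), $\langle\lambda(k)|$ is a left eigenvector with eigenvalue $\lambda(k)$, so $\langle\lambda(k)|U(k)=\lambda(k)\langle\lambda(k)|$; this makes the two $\partial_k$-terms cancel. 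With the normalization $\langle\lambda(k)|\lambda(k)\rangle=1$ this leaves $\langle\lambda(k)|U'(k)|\lambda(k)\rangle=\lambda'(k)$, and evaluating at $k_0$ gives the stated formula.

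The only real obstacle is the first step: establishing a genuinely smooth (or at least $C^1$) local branch of eigenpair through $k_0$, where eigenvalue crossings could in principle spoil regularity. The simple-zero hypothesis on $f$ is precisely the input that rules this out, after which the computation is a one-line algebraic identity. The rest is bookkeeping, and no growth or Schwartz hypothesis from the earlier theorem is needed here.
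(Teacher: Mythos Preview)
Your argument is correct and is essentially the paper's approach: both are the Hellmann--Feynman computation, differentiating in $k$ and using the unit normalization of $|\lambda(k)\rangle$ to kill the $\partial_k|\lambda(k)\rangle$ terms. The only cosmetic difference is that the paper differentiates the Rayleigh quotient $\lambda(k)=\langle\lambda(k)|U(k)|\lambda(k)\rangle$ rather than the eigenvalue equation, and you are more explicit than the paper about why the simple-zero hypothesis guarantees a smooth eigenbranch.
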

\begin{proof}
Take the differential of: 
$$
\lambda(k)=\left\langle\lambda(k)| U(k)|\lambda(k)\right\rangle=tr\left(P_{\lambda}U\right)   
$$
and use the smoothness of $U\left( k\right) $, $\lambda\left(k_0\right) =1$, we can get the result.
\end{proof}

\begin{thm}
	Suppose $A\left(k\right), k\in \R$ is a family of uniform bounded matrix,	Let $U\left(k\right),\ k\in\R$ be a family of unitary matrix. Suppose $k \mapsto U\left(k\right)$ and $k \mapsto A\left(k\right)$ are smooth. Suppose $f\left(k\right)=\det\left(U\left(k\right)-I\right)$ only has simple zero and the eigenvalue $\lambda\left(k\right)$ has non-zero speed: $\frac{d\lambda}{dk}\neq 0$ and the fourier coefficient $c_m\left(g\right) =\int_{-\infty}^{\infty}tr\left( U^m\left(k\right)A\left(k\right) \right)g\left(k\right)dk$ satisfy: $\sum_{m\in \Z}c_m\left(g\right)$ converges absolutely, where $g$ is Schwartz function.
		
	Then we have:
	$$\frac{1}{2\pi}\sum_{m\in \Z}tr\left(U^m\left(k\right)A\left(k \right)  \right)= \sum_{k_0} \frac{\left \langle \lambda\left (k_0\right )  |A\left (k_0\right ) |\lambda\left (k_0\right )   \right \rangle }{\left | \left \langle \lambda\left (k_0\right )  |U^{'}\left (k_0\right ) |\lambda\left (k_0\right )   \right \rangle \right | }\delta\left(k-k_0\right) $$
where $k_0\in \R$ satisfy: $\det\left(U\left(k_0\right)-I\right)=0, \ \ \lambda\left(k_0 \right)=1. $
	The sum converges in the sense of distribution on Schwartz space $\mathcal{S}\left(\R\right) $.
\end{thm}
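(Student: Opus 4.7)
The plan is to follow the structure of the previous theorem's proof almost verbatim, with the matrix element $\langle\lambda(k)|A(k)|\lambda(k)\rangle$ riding along as a weight that passes cleanly through every step. First I would introduce the Abel regulator $t^{|m|}$ for $-1<t<1$, pair against a Schwartz test function $g(k)$, and interchange the sum and integral, which is legitimate by dominated convergence for $|t|<1$ since $A$ is uniformly bounded and $U$ is unitary (so $\|U^m(k)A(k)\|$ is uniformly bounded in $m$ and $k$).

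Next, I would diagonalize $U(k)$ pointwise as $U(k)=\sum_\lambda \lambda(k)|\lambda(k)\rangle\langle\lambda(k)|$, giving
$$tr\left(U^m(k)A(k)\right)=\sum_\lambda \lambda^m(k)\,\langle\lambda(k)|A(k)|\lambda(k)\rangle.$$
Inserting this into the Abel-regularized sum and using the Poisson-kernel identity $\sum_{m\in\Z}t^{|m|}\lambda^m = \frac{1-t^2}{|t-\lambda|^2}$ eigenvalue by eigenvalue converts the left-hand side into
$$\frac{1}{2\pi}\sum_\lambda \int_{-\infty}^\infty \frac{1-t^2}{|t-\lambda(k)|^2}\,\langle\lambda(k)|A(k)|\lambda(k)\rangle\, g(k)\,dk.$$
Letting $t\to 1^-$, the Poisson kernel converges to $\delta(\lambda(k)-1)$ in $\mathcal{S}'(\R)$, so the limit becomes $\sum_\lambda\langle\lambda(k)|A(k)|\lambda(k)\rangle\,\delta(\lambda(k)-1)$. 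Finally, the change-of-variable formula for the Dirac delta, combined with the \emph{speed of eigenvalue} lemma which identifies $\lambda'(k_0)=\langle\lambda(k_0)|U'(k_0)|\lambda(k_0)\rangle$, rewrites this as the sum over $k_0$ advertised in the statement.

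The main obstacle is analytic rather than algebraic: one must justify that near each zero $k_0$ of $f(k)$ there is a smooth local eigenvector branch $|\lambda(k)\rangle$ so that the matrix element $\langle\lambda(k_0)|A(k_0)|\lambda(k_0)\rangle$ makes sense. This follows from the simple-zero hypothesis, since the spectral projector onto the (simple) eigenspace near $\lambda=1$ is given by a contour integral of the resolvent and is therefore smooth in $k$, and its rank-one image produces a unit eigenvector well-defined up to a phase that cancels in the bilinear expression $\langle\cdot|A|\cdot\rangle$. The other delicate point is justifying the interchange of $t\to 1^-$ with the integration: the absolute convergence of $\sum_m c_m(g)$ supplies an Abel-type summability on the left, while on the right the Poisson kernel is positive and has unit mass in $\lambda$, so only the contributions localized at $\lambda(k)=1$ survive; away from the $k_0$ the test weight $\langle\lambda(k)|A(k)|\lambda(k)\rangle g(k)$ is smooth and bounded, which is enough for the standard distributional limit of the Poisson kernel.
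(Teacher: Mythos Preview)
Your proposal is correct and follows essentially the same route as the paper: introduce the Abel regulator $t^{|m|}$, pair with a Schwartz test function, swap sum and integral by dominated convergence, diagonalize $U(k)$ to produce the Poisson kernel weighted by $\langle\lambda(k)|A(k)|\lambda(k)\rangle$, and then let $t\to 1^{-}$ together with the change-of-variable formula and the speed-of-eigenvalue lemma. Your treatment is in fact more careful than the paper's, which records only the key displayed identity and the instruction to pass to the limit.
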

\begin{proof}
	Similar as Theorem 2.2, we have:
\end{proof}
\begin{align*}
	\sum_{m\in \Z}t^{\left|m\right| }\int_{-\infty}^{\infty}tr\left(U^m\left(k\right)A\left(k \right) \right)g\left (k\right )dk=& 	\int_{-\infty}^{\infty}\sum_{m\in \Z}t^{\left|m\right| }tr\left(U^m\left(k\right)A\left(k \right) \right)g\left (k\right )dk \\
	=\sum_{\lambda}&\int_{-\infty}^{\infty}\frac{(1-t^2)\left \langle \lambda\left (k\right )  |A\left (k\right ) |\lambda\left (k\right )   \right \rangle}{\left | t-\lambda\left (k\right )  \right |^2 } g\left (k\right )dk.
\end{align*}
Take the limit: $t\to 1^{-}$, we get the result.
\begin{cor}
	Suppose the generator of $U\left(k\right)$: 
	$$D\left(k\right)=-i\frac{dU(k)}{dk}U^{-1}\left(k\right)$$
	is uniform-bounded positive definite matrix satisfy the condition above, then we have:
\begin{thm}
	$$\frac{1}{2\pi i}\sum_{m\in \Z}tr\left(U^m\left(k\right)U^{'}\left(k \right)  \right)= \sum_{k_0} \delta\left(k-k_0\right) $$
\end{thm}
\end{cor}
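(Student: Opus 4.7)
The plan is to derive this corollary as a direct specialization of Theorem~2.4 to $A(k)=U'(k)$; the positive-generator hypothesis will collapse the phase factor multiplying each Dirac spike into the single constant $i$, which we then absorb into the left-hand side by dividing by $i$.

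First I would check that the hypotheses of Theorem~2.4 apply to $A(k)=U'(k)$: smoothness is inherited from $U$, and the uniform bound on $A(k)$ comes from the identity $U'(k)=iD(k)U(k)$ together with the assumed uniform bound on $D$ and unitarity of $U$. Next comes the key algebraic step: rewriting $D=-iU'U^{-1}$ as $U'=iDU$ and evaluating at a zero $k_0$ of $f$, where $\lambda(k_0)=1$ gives $U(k_0)|\lambda(k_0)\rangle=|\lambda(k_0)\rangle$ and hence
$$\langle\lambda(k_0)|U'(k_0)|\lambda(k_0)\rangle = i\,\langle\lambda(k_0)|D(k_0)|\lambda(k_0)\rangle.$$
Since $D(k_0)$ is positive definite and $|\lambda(k_0)\rangle$ is nonzero, the right-hand factor is a strictly positive real number, so the coefficient $\langle\lambda|A|\lambda\rangle/|\langle\lambda|U'|\lambda\rangle|$ in Theorem~2.4 reduces to the universal constant $i$ at every $k_0$. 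Substituting into Theorem~2.4 and transferring $i$ to the denominator on the left yields the claimed identity.

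The main subtlety I anticipate is that one must use positivity of $D$, not merely self-adjointness: self-adjointness alone would only pin the phase to $\pm i$, and it is the strict positivity of $\langle\lambda|D|\lambda\rangle$ that both removes the sign ambiguity and guarantees $U'(k_0)|\lambda(k_0)\rangle\neq 0$, so that the nonvanishing-speed hypothesis of Theorem~2.4 is automatically compatible with the positivity assumption and the ratio in that theorem is a bona fide phase rather than a $0/0$ expression.
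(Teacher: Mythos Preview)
Your proposal is correct and matches the paper's intended argument: the paper gives no explicit proof for this corollary, but its placement immediately after Theorem~2.4 together with the phrase ``satisfy the condition above'' makes clear that one is meant to specialize $A(k)=U'(k)$ there, exactly as you do. Your observation that positive definiteness of $D$ (not mere self-adjointness) is what fixes the phase to $+i$ and simultaneously guarantees the nonvanishing-speed hypothesis is the right reading of why the paper requires a positive generator.
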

\begin{ex}(Crystalline measure)
	
	Suppose $U\left(k\right)=\begin{pmatrix}
		e^{ik\ell_1}  & 0 & .. & 0\\
		0 & e^{ik\ell_2} & .. & 0\\
		0 & 0 & ..& 0\\
		.. & .. & ..& e^{ik\ell_n}
	\end{pmatrix}S,\ $
	where S is a constant unitary matrix, $\ell_1,\ell_2,...,\ell_n>0$. Let $A$ be a constant unitary matrix. Then the trace formula can be written as:
	$$
	\sum_{n\in \Z}\sum_{n_1+n_2...+n_N=n} c\left(n_1,n_2,...n_N\right) e^{ik\left(n_1\ell_1+...+n_N\ell_N\right) }=\sum_{k_0}c'\left(n\right) \delta(k-k_0)
	$$
	where $k_0$ is zero of $f\left(k\right)=det\left(I-U\left(k\right)  \right)$, $c'\left(n\right)=\frac{\left \langle \lambda\left (k_0\right )  |A\left (k_0\right ) |\lambda\left (k_0\right )   \right \rangle }{\left | \left \langle \lambda\left (k_0\right )  |U^{'}\left (k_0\right ) |\lambda\left (k_0\right )   \right \rangle \right | }$, $n_1,n_2...n_N$ run over all non-positive numbers or all non-negetive numbers.
\end{ex}

\section{Visible of this result}
 There are two ways to draw this result by software. First, we may consider the sum: $\sum_{m\in \Z}t^{\left|m\right| }tr\left(U^m\left(k\right)\right), -1<t<1$, take $t$ near 1.
 
 Second, we can use Newton's identity. Let $h\left(\lambda,k\right)=\det\left(\lambda I-U\left(k\right) \right) $ be the characteristic polynomial of $U\left(k\right)$, from Cayley-Hamilton theorem we have:
 $$
 h\left(U,k\right)=\sum_{i=0}^{n} c_i\left ( k \right ) U^{i}\left (k\right ) =0,
 $$
 multiply $U$ and $U^{-1}$ and take the trace we will get many linear relationships, known as Newton's identity\cite{silva1998trace}:
 
 $$
 \begin{cases}
 	...  \\
 	\sum_{i=0}^{n} c_i\left ( k \right )tr\left  ( U^{i+j}\left (k\right )\right )&=0  \\
 	\sum_{i=0}^{n} c_i\left ( k \right )tr\left  ( U^{i+j+1}\left (k\right )\right )&=0 \\
 	\sum_{i=0}^{n} c_i\left ( k \right )tr\left  ( U^{i+j+2}\left (k\right )\right )&=0 \\
 	...
 \end{cases} 
 $$
 where $j \in \Z$ is an arbitary integer.
 
 Using the Newton's identity, we have:
 $$
 \left( \sum_{i=0}^{n}c_i\left(k\right)\right)\left(\sum_{j=-N}^{N}tr\left  ( U^{j}\left (k\right )\right )\right)=R\left(n,N,k\right),    
 $$ 
 where the reminder $R\left(n,N,k\right)$ is uniform bounded on variable $N,k$. Thus we have:
 $$
 \left(\sum_{j=-N}^{N}tr\left  ( U^{j}\left (k\right )\right )\right)=\frac{R\left(n,N,k\right)}{h\left(1,k\right)}=\frac{R\left(n,N,k\right)}{\det\left(I-U\left(k\right) \right) }
 $$

\bibliographystyle{plain}
\bibliography{Ref}

\end{document}